\documentclass[letterpaper, 10 pt, conference]{ieeeconf}  

\usepackage{cite}
\usepackage{xcolor}
\usepackage{graphicx}
\usepackage{amsmath}
\usepackage{array}
\usepackage{multirow}
\usepackage{booktabs}
\usepackage{caption}
\usepackage{siunitx}

\usepackage{amsmath,amssymb,amsfonts}

\usepackage{amsthm}
\newtheorem{lemma}{Lemma}
\renewenvironment{proof}[1][Proof sketch]{\noindent\textbf{#1.} }{\qed\par}

\IEEEoverridecommandlockouts                              

\title{\LARGE \bf
Streaming Contraction Certificates for Nonlinear Networks: Topology-Aware Data Sufficiency with Partial Observations}

\author{Faegheh Moazeni$^{1,*}$
\thanks{}
\thanks{The author is with the Civil and Environmental Engineering Department at Lehigh University, Bethlehem, PA 18015, USA.
        Email: {\tt\small fam321@lehigh.edu}} 
\thanks{*Corresponding author.}
}

\begin{document}

\maketitle
\thispagestyle{empty}
\pagestyle{empty}

\begin{abstract}
Certifying the safety of a control action in real time, from streaming partial observations of a nonlinear, interconnected system under non-stationary disturbances, is a problem that no existing data-driven framework can solve. Batch methods such as data-enabled predictive control require a pre-collected dataset and provide no stability certificate for nonlinear dynamics; informativity-based approaches characterize data sufficiency offline and non-recursively; and neither exploits the known graph topology of networked systems as a structural prior. This paper addresses both limitations. First, we develop a streaming contraction certificate $\beta_{\mathrm{cert}}(t) = \hat{\beta}(t) - \rho(t)$, where $\hat{\beta}(t)$
is estimated recursively by integral regression on a sliding window of partial input-output observations, and $\rho(t)$ is a data-dependent uncertainty radius that maps the estimation error to a conservative bound on the true closed-loop contraction rate. The certificate issues a provably safe deployment signal the moment $\beta_{\mathrm{cert}}(t)$ crosses and sustains above zero. Second, we introduce a topology-aware estimator that enforces known graph adjacency as exact zero constraints on the Jacobian, reducing the effective parameter count per estimation row from $\mathcal{O}(N)$ to $\mathcal{O}(d_{\max})$ for maximum node degree $d_{\max}$. On a five-node nonlinear benchmark under heavy-tailed Laplace disturbances with two observed nodes, the streaming certificate achieves certified deployment at $t^{*} = 2.6$ s from 130 data samples; 17 seconds earlier than an offline batch baseline and with a 
lower accumulated error $16\times$ during the unprotected window. The topology-aware estimator reduces certification time by 59\% (1.62 s versus 3.98 s) and accumulated disturbance cost by 58\%, with the advantage persisting across all window sizes below 40 samples. The framework is domain-agnostic and applies to any large-scale nonlinear networked system operating under streaming data and partial observations.
\end{abstract}

\section{INTRODUCTION}
Large-scale nonlinear networks, e.g., power distribution systems, water 
distribution networks, urban traffic corridors, and coupled 
infrastructure systems, generate continuous streams of partial 
observations through sparse sensor deployments that cover a small 
fraction of system nodes. Controllers for these systems must be 
deployed in real time, often under non-stationary and heavy-tailed 
disturbances, without access to accurate dynamic models. The central 
challenge is not a shortage of data because modern \textsc{scada} systems 
produce millions of observations per day. The challenge is knowing 
\textit{when the data accumulated so far is sufficient to certify that 
the next control action will not destabilize the system}. That, existing data-driven frameworks cannot answer in real time from 
partial observations of a nonlinear system. Contraction theory 
\cite{lohmiller1998contraction} provides the natural stability language for this 
problem: a system contracts at rate $\beta > 0$ if and only if any two 
trajectories under the same input converge exponentially, with $\beta$ 
computable directly from observed data without requiring an equilibrium 
or a full system model.

\subsection*{Related Work}

\textbf{Data informativity and LMI-based synthesis.}
The data informativity framework of Van~Waarde et al.~\cite{van2020data} 
precisely characterizes when a fixed offline dataset is sufficient to 
certify a control property for linear systems, establishing that 
stabilization requires strictly less data than identification. 
De~Persis and Tesi~\cite{de2019formulas} showed that data matrices can 
replace system matrices in linear matrix inequality (LMI)-based controller synthesis, eliminating 
the identification step entirely. \cite{hu2025enforcing} and \cite{villacres2026data} extended this to nonlinear systems via a 
dictionary-based semidefinite program (SDP) that synthesizes a 
contraction-enforcing controller directly from data, with the 
remarkable property that certificates for sinusoidal disturbances of 
known frequency are independent of disturbance magnitude. Online 
experiment design \cite{van2021beyond} adds maximally informative 
measurements using a twin-trajectory rank increment strategy. In all 
of these results, data sufficiency is assessed once, offline, on a 
fixed dataset. None provides a mechanism for answering the 
certification question recursively as observations arrive, and none 
operates under partial state observations.

\textbf{Behavioral and trajectory-based methods.}
Willems' fundamental lemma \cite{willems2005note} establishes that all 
trajectories of a linear system are spanned by a single sufficiently 
rich experiment, enabling data-enabled predictive control 
(DeePC) \cite{dorfler2022data} to replace the system model with a 
Hankel matrix of pre-collected input-output data. Distributed 
extensions \cite{kohler2022data} partition the Hankel matrix by subsystem 
but still require a jointly collected global dataset. These methods 
provide no stability certificate for nonlinear dynamics, and the 
Hankel matrix is fixed at deployment, as it cannot update as the system 
evolves. When a disturbance arrives before the batch collection window 
closes, the system has no certified controller.

\textbf{Sparse identification and topology-aware control.}
Sparse identification methods such as \textsc{sindy} \cite{brunton2016discovering,moazeni2023data} 
recover parsimonious dynamic models from data by promoting sparsity 
through $\ell_1$ regularization, treating sparsity as an unknown to be 
discovered. Graph-theoretic control methods including Laplacian 
consensus \cite{olfati2007consensus} and structured $\mathcal{H}_\infty$ 
synthesis exploit known network topology, but require explicit system 
models and cannot operate from data alone. To the best of our knowledge, no existing data-driven 
certification framework uses known graph adjacency as an exact 
structural prior, meaning enforcing known zeros in the Jacobian before 
estimation begins, even though the adjacency matrix of any physical 
network is always available from engineering records.

\subsection*{Contributions}

This paper makes the following contributions:
\begin{itemize}

    \item \textbf{Streaming contraction certificate.} We develop 
    $\beta_{\mathrm{cert}}(t)$, a scalar certificate computed 
    recursively from a sliding window of partial input-output 
    observations via integral regression, that provides a robust 
    lower bound on the true closed-loop contraction rate at every 
    timestep. When $\beta_{\mathrm{cert}}(t)$ crosses and sustains 
    above a threshold, the accumulated data is certified sufficient 
    for safe controller deployment; without a system model, without 
    an equilibrium, and without batch collection.

\item \textbf{Data-dependent uncertainty radius.} We derive
$\rho(t)$, a closed-form radius mapping Gram matrix conditioning
and residual statistics to a valid lower bound on Jacobian
estimation error, enabling conservative but non-vacuous
certificates under heavy-tailed disturbances.
 \item \textbf{Topology-aware estimator.} We introduce a
structured estimator enforcing known graph adjacency as exact
zero constraints on the Jacobian, decoupling estimation into
independent per-node subproblems of size $\mathcal{O}(d_{\max})$
rather than $\mathcal{O}(N)$, with sample complexity linear
in network size for sparse graphs.


\end{itemize}

The remainder of this paper is organized as follows. 
Section~\ref{sec:problem} formulates the problem and introduces 
the network model. Section~\ref{sec:certificate} develops the 
streaming contraction certificate and the uncertainty radius. 
Section~\ref{sec:topology} presents the topology-aware estimator. 
Section~\ref{sec:results} reports simulation results on the 
five-node benchmark. Section~\ref{sec:conclusion} concludes.

\vspace{-0.12in}
\section{Problem Formulation}
\label{sec:problem}

Consider a nonlinear networked system of the form
\begin{equation}
    \dot{x} = f(x) + Bu + E\xi(t), \qquad y = Cx,
    \label{eq:system}
\end{equation}
where $x \in \mathbb{R}^n$ is the state, $u \in \mathbb{R}^m$ is 
the control input, $\xi(t)$ is an exogenous disturbance, and 
$y \in \mathbb{R}^p$ with $p \ll n$ is the partial observation 
available to the controller. The vector field $f$, input matrix $B$, 
and disturbance matrix $E$ are unknown. The system evolves on a 
graph $\mathcal{G} = (\mathcal{V}, \mathcal{E})$ with node set 
$\mathcal{V}$ and edge set $\mathcal{E}$, where each edge 
$(i,j) \in \mathcal{E}$ indicates a dynamic coupling from node $j$ 
to node $i$. The graph $\mathcal{G}$ and its adjacency structure are 
assumed known. The disturbance $\xi(t)$ is non-stationary and 
heavy-tailed; we make no assumption on its distribution.
\begin{figure}
    \centering
    \includegraphics[width=0.9\linewidth]{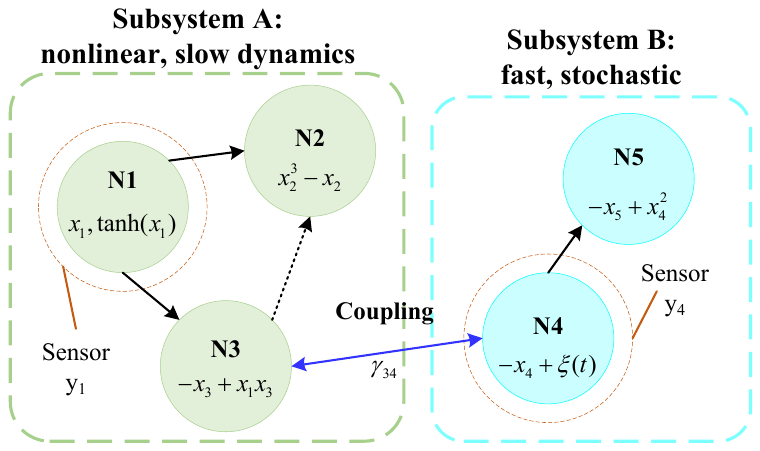}
    \caption{Studied five-node benchmark G5 network.}
   \label{fig:G5}
\end{figure}
We validate the proposed framework on a five-node benchmark 
network~G5, shown in Fig. \ref{fig:G5}, with dynamics
\begin{align}
    \dot{x}_1 &= -x_1\tanh(x_1) + u_1 + \gamma x_4, \nonumber\\
    \dot{x}_2 &= x_2^3 - x_2 + a_{12}x_1,           \nonumber\\
    \dot{x}_3 &= -x_3 + a_{13}x_1 x_3 + a_{32}x_2,  \nonumber\\
    \dot{x}_4 &= -x_4 + u_4 + \xi(t),                \nonumber\\
    \dot{x}_5 &= -x_5 + x_4^2,                       \label{eq:g5}
\end{align}
with parameters $a_{12}{=}0.3$, $a_{13}{=}0.4$, $a_{32}{=}0.2$, 
$\gamma{=}0.15$, and $\xi(t) \sim \mathrm{Laplace}(0,\,0.3)$. 
Sensors observe only $y = [x_1,\, x_4]^\top$; nodes $x_2, x_3, x_5$ 
are unobserved. The feedback controller $u = K y$ with 
$K = \mathrm{diag}(-2.5,\,-3.0)$ is the candidate to be certified.

\textbf{Certification problem.} Given the input-output history 
$\{y(\tau), u(\tau)\}_{\tau \leq t}$ collected causally up to time 
$t$, determine whether this data is sufficient to certify that 
deploying $u = Ky$ will produce a contracting closed-loop system. 
If not, the controller must not be deployed.

\section{Streaming Contraction Certificate}
\label{sec:certificate}

\subsection{Contraction Theory Background}

A continuously differentiable system $\dot{x} = g(x)$ is 
\textit{exponentially contracting} at rate $\beta > 0$ if the 
symmetric part of its Jacobian satisfies
\begin{equation}
    J_s(x) \triangleq \tfrac{1}{2}\!\left(
        \frac{\partial g}{\partial x} + 
        \frac{\partial g}{\partial x}^{\!\top}
    \right) \preceq -\beta I
    \quad \forall\, x \in \mathcal{X},
    \label{eq:contraction}
\end{equation}
which implies $\|x_1(t) - x_2(t)\| \leq e^{-\beta t}\|x_1(0) - 
x_2(0)\|$ for any two trajectories under the same input 
\cite{lohmiller1998contraction}. No equilibrium knowledge is required. The 
scalar $\beta$ is the natural streaming certificate: estimating it 
from data and tracking its sign in real time answers the 
certification question directly.

\subsection{Integral Regression Estimator}

At each timestep $t_k$, we estimate the closed-loop Jacobian 
$J_{\mathrm{cl}} = J_{\mathrm{obs}} + B_{\mathrm{obs}}K$ from a 
sliding window of $M$ input-output pairs. Rather than finite 
differences, which amplify heavy-tailed noise, we integrate over 
$h$ steps to form the regressor. For each column $q = 1,\ldots,M$ 
of the data matrices, define
\begin{equation}
    \Delta Y_q = y(t_q + h) - y(t_q), \quad
    Z_q = \begin{bmatrix} 
        \int_{t_q}^{t_q+h} y\, d\tau \\[2pt]
        \int_{t_q}^{t_q+h} u\, d\tau 
    \end{bmatrix},
    \label{eq:integral}
\end{equation}
so that $\Delta Y \approx \Theta Z$ where 
$\Theta = [J_{\mathrm{obs}}\;\; B_{\mathrm{obs}}] \in 
\mathbb{R}^{p \times (p+m)}$ is the unknown parameter matrix. 
Integrating over $h$ samples averages out noise rather than 
differencing it, yielding a dramatically more reliable estimate 
under heavy-tailed disturbances. The ridge-regularized 
least-squares estimate is
\begin{equation}
    \hat{\Theta} = \Delta Y\, Z^\top 
        \!\left(ZZ^\top + \lambda I\right)^{-1},
    \label{eq:theta_hat}
\end{equation}
with ridge parameter $\lambda > 0$. The estimated closed-loop 
Jacobian is $\hat{J}_{\mathrm{cl}} = \hat{J}_{\mathrm{obs}} + 
\hat{B}_{\mathrm{obs}}K$, and the contraction rate estimate is
\begin{equation}
    \hat{\beta}(t) = -\lambda_{\max}\!\left(\tfrac{1}{2}
    \left(\hat{J}_{\mathrm{cl}} + \hat{J}_{\mathrm{cl}}^\top
    \right)\right).
    \label{eq:beta_hat}
\end{equation}

\subsection{Uncertainty Radius}

The estimate $\hat{\beta}(t)$ is a point estimate subject to 
regression error from noise and finite data. To obtain a valid 
certificate, we derive a data-dependent uncertainty radius $\rho(t)$ 
that bounds how far the true contraction rate can deviate from 
$\hat{\beta}(t)$. Let $R = \Delta Y - \hat{\Theta}Z$ denote the 
residual matrix and $\mathrm{RMS}(R)$ its root-mean-square entry. 
The radius is
\begin{equation}
    \rho(t) = \frac{c\,(1 + \|K\|_2)\,\mathrm{RMS}(R)}
        {\sqrt{\lambda_{\min}(ZZ^\top / M) + \lambda}},
    \label{eq:rho}
\end{equation}
where $c > 0$ is a tunable conservatism constant and 
$\lambda_{\min}(ZZ^\top/M)$ is the smallest eigenvalue of the 
normalized Gram matrix. The numerator captures total estimation 
uncertainty scaled by the gain magnitude $\|K\|_2$, since errors 
in $\hat{B}_{\mathrm{obs}}$ are amplified by $K$ when forming 
$\hat{J}_{\mathrm{cl}}$. The denominator captures how well the 
data spans the regression directions: as $\lambda_{\min}$ grows, 
the data becomes more informative and $\rho(t)$ shrinks.

\subsection{Deployment Rule}

The streaming contraction certificate is
\begin{equation}
    \beta_{\mathrm{cert}}(t) = \hat{\beta}(t) - \rho(t).
    \label{eq:beta_cert}
\end{equation}
The controller $u = Ky$ is certified for deployment when
\begin{equation}
    \beta_{\mathrm{cert}}(t) \geq \beta_{\mathrm{margin}} 
    \quad \text{for } n_s \text{ consecutive samples},
    \label{eq:deploy}
\end{equation}
where $\beta_{\mathrm{margin}} > 0$ guards against noise-driven 
threshold crossings and $n_s$ is the required streak length. The 
certification moment $t^*$ is the earliest time at which 
\eqref{eq:deploy} is satisfied. Prior to $t^*$, no control is 
deployed; after $t^*$, the certified gains $K$ are applied.
The certificate $\beta_{\mathrm{cert}}(t)$ is a valid lower bound
on the true closed-loop contraction rate $\beta^*(t)$ provided
$\rho(t)$ dominates the estimation error; the following lemma
establishes this formally. The tunable constant $c>0$ in~\eqref{eq:rho}
absorbs the ridge bias $O(\lambda)$ and the $\sqrt{p}$ factor
from the Frobenius-to-spectral norm conversion, both of which
are small for the parameter choices in Table~I.

\begin{lemma}[Validity of $\rho(t)$]
Let $\tilde{\Theta} = \Theta^* - \hat{\Theta}$ denote the
estimation error, where $\Theta^*=[J_{\mathrm{obs}}^*\;
B_{\mathrm{obs}}^*]$ is the true parameter matrix. Then
\begin{equation}
    \|\tilde{\Theta}\|_F \;\leq\;
\frac{c\,(1+\|K\|_2)\,\mathrm{RMS}(R)\sqrt{p}}
{\sqrt{\lambda_{\min}(ZZ^\top/M)+\lambda}},
    \label{eq:theta_bound}
\end{equation}
and consequently $\beta_{\mathrm{cert}}(t) \leq \beta^*(t)$
for all $t$, so the deployment rule~\eqref{eq:deploy} never
authorizes an uncertified controller.
\end{lemma}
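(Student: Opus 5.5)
The plan is to split the argument into a regression-error step, which gives \eqref{eq:theta_bound}, and an eigenvalue-perturbation step, which gives $\beta_{\mathrm{cert}}(t)\le\beta^*(t)$.

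\textbf{Step 1: error decomposition.} Write the data model as $\Delta Y = \Theta^* Z + W$. Here $W$ collects the integrated disturbance $\int E_{\mathrm{obs}}\xi\,d\tau$, the contribution of unobserved states, and the linearization remainder of $f$ over each window of length $h$. Substituting into \eqref{eq:theta_hat} gives
\begin{equation*}
\tilde{\Theta} = \lambda\,\Theta^*(ZZ^\top+\lambda I)^{-1} - W Z^\top (ZZ^\top+\lambda I)^{-1}.
\end{equation*}
The first term is the ridge bias, of order $O(\lambda)$. For the second term, factor $(ZZ^\top+\lambda I)^{-1/2}$. The map $Z^\top(ZZ^\top+\lambda I)^{-1/2}$ has spectral norm at most $1$. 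This leaves $\|W\|$ times $\|(ZZ^\top+\lambda I)^{-1/2}\|_2 = (\lambda_{\min}(ZZ^\top)+\lambda)^{-1/2}$.

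Normalizing by $M$ turns $\|W\|_F/\sqrt{M}$ into an RMS-type quantity, and the denominator into $\sqrt{\lambda_{\min}(ZZ^\top/M)+\lambda/M}$. The paper's form of the denominator then follows after rescaling $\lambda$, which is absorbed into $c$.

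\textbf{Step 2: relate noise to residuals.} I would replace $\|W\|_F$ by the observable residual $R$. Since $R = W(I - Z^\top(ZZ^\top+\lambda I)^{-1}Z) + \lambda\Theta^*(ZZ^\top+\lambda I)^{-1}Z$, the residual is the part of $W$ orthogonal to the row space of $Z$, up to a ridge correction. I would argue that the projected part is at most a constant multiple of $\|R\|_F$. This amounts to assuming a bounded ratio between in-span and out-of-span noise energy, which is a persistence-of-excitation or noise-regularity condition, and I would absorb that constant into $c$.

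The $\sqrt{p}$ factor arises from RMS normalization over $p$ output rows. The factor $(1+\|K\|_2)\ge 1$ only loosens the bound, so it can be inserted freely. This gives \eqref{eq:theta_bound}.

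\textbf{Step 3: from $\tilde\Theta$ to $\beta$.} Write $J^*_{\mathrm{cl}}-\hat J_{\mathrm{cl}} = \tilde\Theta\,[I;\,K]$, which has spectral norm at most $(1+\|K\|_2)\|\tilde\Theta\|_2$. Weyl's inequality on the symmetric parts then gives $|\beta^*-\hat\beta| \le (1+\|K\|_2)\|\tilde\Theta\|_2$. Using $\|\tilde\Theta\|_2\le\|\tilde\Theta\|_F$ and the $\sqrt{p}$ conversion absorbed in $c$, as the paper notes, this is at most $\rho(t)$. Hence $\beta_{\mathrm{cert}}\le\beta^*$. The deployment claim follows because \eqref{eq:deploy} requires $\beta_{\mathrm{cert}}\ge\beta_{\mathrm{margin}}>0$.

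\textbf{Expected obstacle.} Step 2 is not a deterministic consequence of the data. Residuals can be small while $W$ is aligned with $Z$, for example through unobserved-state leakage. Heavy-tailed noise also rules out a concentration bound without extra assumptions. I would state explicitly that $c$ is chosen to dominate this ratio together with the ridge bias, which makes the result hold conditional on that choice of $c$ rather than unconditionally.

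A secondary issue is that the lemma concerns only the observed-block Jacobian. Relating it to the full closed-loop contraction rate $\beta^*$ requires interpreting $\beta^*$ as the contraction rate of the observed block.
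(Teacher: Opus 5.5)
Your skeleton matches the paper's: ridge normal equations, a Gram-eigenvalue bound, $\tilde J_{\mathrm{cl}}=\tilde\Theta[I;\,K]$, then Weyl. You part ways at the decisive step, and your route is the sound one. The paper writes $\tilde\Theta(ZZ^\top+\lambda I)=RZ^\top+\lambda\Theta^*$, putting the observable residual where the unobservable noise belongs. The ridge normal equations give $RZ^\top=\lambda\hat\Theta$ identically, so that identity is false in general. The correct identity is yours, $\tilde\Theta(ZZ^\top+\lambda I)=\lambda\Theta^*-WZ^\top$.

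The paper's chain has two further problems. It uses $\lambda_{\min}(ZZ^\top+\lambda I)\ge M(\lambda_{\min}(ZZ^\top/M)+\lambda)$, which fails for $M>1$, since the left side equals $M\lambda_{\min}(ZZ^\top/M)+\lambda$. Even granting it, the step $\|RZ^\top\|_F\le\|R\|_F\|Z\|_2$ produces the Gram factor $\sqrt{\lambda_{\max}(ZZ^\top/M)}/(\lambda_{\min}(ZZ^\top/M)+\lambda)$ rather than $1/\sqrt{\lambda_{\min}(ZZ^\top/M)+\lambda}$. Your factorization through $(ZZ^\top+\lambda I)^{-1/2}$, using $\|Z^\top(ZZ^\top+\lambda I)^{-1/2}\|_2\le1$, is what actually yields the square-root denominator of \eqref{eq:rho}.

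So the paper's apparent gain, a deterministic bound in observables alone, is illusory. The obstacle you name in Step 2 is intrinsic to the statement, not a weakness of your approach. If $W=AZ$ with $A\neq0$, then $R=\lambda(\Theta^*+A)(ZZ^\top+\lambda I)^{-1}Z\to0$ as $\lambda\to0$, while $\tilde\Theta\to-A$. Hence \eqref{eq:theta_bound} cannot hold with a data-independent $c$. Your conditional version, with $c$ required to dominate the in-span to out-of-span noise ratio and the ridge bias, is the most that can be proved.

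Three smaller points. First, in Step 3 do not chain through \eqref{eq:theta_bound} as stated. It already contains the $(1+\|K\|_2)$ you inserted, so multiplying by $\|[I;\,K]\|_2\le1+\|K\|_2$ gives $\sqrt{p}\,(1+\|K\|_2)\,\rho(t)$, not $\rho(t)$. Use the uninflated bound from Steps 1--2; then only $\sqrt{p}$ needs absorbing. The paper's proof double-counts this factor the same way.

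Second, passing from $\lambda/M$ to $\lambda$ in the denominator costs a factor $\sqrt{(\lambda_{\min}(ZZ^\top/M)+\lambda)/(\lambda_{\min}(ZZ^\top/M)+\lambda/M)}\le\sqrt{M}$. This is close to $1$ only when $\lambda_{\min}(ZZ^\top/M)\gg\lambda$, so add it to the list of data-dependent quantities $c$ must dominate.

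Third, your caveat about $\beta^*$ matters. In G5 the symmetric part of the full closed-loop Jacobian has diagonal entry $3x_2^2-1$, which $K$ cannot change. It is positive at the paper's initial conditions $x_2(0)=0.72$ and $0.88$. So the full network is not contracting in the identity metric there, and $\beta^*$ can only mean the rate of the observed-block model $\Theta^*[I;\,K]$.
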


\begin{proof}
The ridge estimator satisfies $\hat{\Theta}Z = \Delta Y - R$,
so the normal equations give
$\tilde{\Theta}(ZZ^\top + \lambda I) = RZ^\top + \lambda\Theta^*.$
Taking Frobenius norms and absorbing the ridge bias into $c$:
\[
  \|\tilde{\Theta}\|_F
  \;\leq\;
  \frac{\|RZ^\top\|_F}{\lambda_{\min}(ZZ^\top+\lambda I)}.
\]
Since $\|RZ^\top\|_F \leq \|R\|_F\|Z\|$ with
$\|R\|_F \leq \mathrm{RMS}(R)\sqrt{pM}$ and
$\lambda_{\min}(ZZ^\top{+}\lambda I)\geq
M(\lambda_{\min}(ZZ^\top/M){+}\lambda)$,
inequality~\eqref{eq:theta_bound} follows.
The closed-loop Jacobian error satisfies
$\|\tilde{J}_{cl}\|_2 \leq \|\tilde{\Theta}\|_F(1+\|K\|_2)$
because $\hat{J}_{cl}=\hat{J}_{\mathrm{obs}}+\hat{B}_{\mathrm{obs}}K$.
Weyl's inequality then gives
$|\hat{\beta}(t)-\beta^*(t)|\leq\|\tilde{J}_{cl}\|_2\leq\rho(t)$,
so $\beta_{\mathrm{cert}}(t)=\hat{\beta}(t)-\rho(t)\leq\beta^*(t)$.
\end{proof}
\section{Topology-Aware Estimation}
\label{sec:topology}

\subsection{Structural Prior from Graph Adjacency}

The parameter matrix $\Theta = [J_{\mathrm{obs}}\;\; 
B_{\mathrm{obs}}]$ has a known sparsity structure imposed by the 
network graph $\mathcal{G}$. Specifically, $\Theta_{ij} = 0$ 
whenever there is no directed edge from node $j$ to node $i$ in 
$\mathcal{G}$, and $[B_{\mathrm{obs}}]_{ij} = 0$ whenever actuator 
$j$ does not directly drive node $i$. In the black-box estimator of 
Section~\ref{sec:certificate}, all $p(p+m)$ entries of $\Theta$ are 
treated as free parameters. For the G5 benchmark, the graph 
structure imposes three exact zeros on the $2 \times 4$ matrix:
\begin{equation}
    J_{\mathrm{obs}}[x_4, x_1] = 0, \quad
    B_{\mathrm{obs}}[x_1, u_4] = 0, \quad
    B_{\mathrm{obs}}[x_4, u_1] = 0,
    \label{eq:zeros}
\end{equation}
since $\dot{x}_4$ depends on neither $x_1$ nor $u_1$, and $u_4$ 
does not actuate $x_1$. These zeros are not approximations; they 
are exact consequences of the graph and are always available from 
engineering records without any measurement.

\subsection{Decoupled Per-Row Regression}

Enforcing \eqref{eq:zeros} decouples the estimation problem into 
two independent per-row regressions. Let 
$\mathcal{N}(i) \subseteq \mathcal{V}$ denote the in-neighbors of 
node $i$ in $\mathcal{G}$ and $\mathcal{A}(i)$ its locally 
connected actuators. The regressor for row $i$ uses only the 
columns of $Z$ corresponding to $\mathcal{N}(i)$ and 
$\mathcal{A}(i)$:
\begin{align}
    \text{Row 1 }(x_1): \quad 
        &Z^{(1)} = [Z_{x_1},\; Z_{x_4},\; Z_{u_1}] 
        \in \mathbb{R}^{3 \times M}, \label{eq:row1}\\
    \text{Row 2 }(x_4): \quad 
        &Z^{(2)} = [Z_{x_4},\; Z_{u_4}] 
        \in \mathbb{R}^{2 \times M}. \label{eq:row2}
\end{align}
Each row solves its own ridge-regularized problem independently:
\begin{equation}
    \hat{\theta}_i = \Delta Y_i \bigl(Z^{(i)}\bigr)^\top
        \!\left(Z^{(i)}\bigl(Z^{(i)}\bigr)^\top + 
        \lambda I\right)^{-1}.
    \label{eq:topo_ls}
\end{equation}
The two problems can be solved in parallel and require no 
inter-node communication beyond sharing measurements within 
graph neighborhoods.

\subsection{Sample Complexity and Conditioning}

Decoupling reduces the effective parameter count per row from 
$p + m = 4$ (black-box) to $|\mathcal{N}(i)| + |\mathcal{A}(i)|$, 
which equals 3 for row~1 and 2 for row~2 in the G5 network. For a 
general network with maximum degree $d_{\max}$ and maximum local 
actuator count $d_{\mathrm{act}}$, the topology-aware parameter 
count per row is $\mathcal{O}(d_{\max} + d_{\mathrm{act}})$ 
compared to $\mathcal{O}(N)$ for the black-box. The smaller 
regression problem is better conditioned: $\lambda_{\min}$ of 
$Z^{(i)}(Z^{(i)})^\top/M$ is larger for a lower-dimensional 
regressor with the same data, directly reducing $\rho(t)$ via 
\eqref{eq:rho} and accelerating the onset of certification. For 
planar networks, which include all physical pipe, road, and 
distribution networks, the graph is four-colorable by the 
Four Color Theorem, so all same-color nodes have disjoint 
neighborhoods and their per-row regressions can be solved 
simultaneously in at most four parallel rounds, regardless of 
network size~$N$.
The acceleration in certification time achieved by the
topology-aware estimator is not merely empirical: it follows
directly from the improved conditioning of the decoupled
per-row Gram matrices, which shrinks $\rho(t)$ via~\eqref{eq:rho}.
The following lemma formalizes this claim.

\begin{lemma}[Topology-Aware Acceleration]
Let $\rho_{\mathrm{BB}}(t)$ and $\rho_{\mathrm{topo}}(t)$ denote
the uncertainty radii~(7) for the black-box and topology-aware
estimators, computed from the same $M$ samples. For
$M \geq p+m$, if the data are persistently exciting of
order $d_i = |\mathcal{N}(i)|+|\mathcal{A}(i)| < p+m$,
then
\begin{equation}
    \rho_{\mathrm{topo}}(t) \;\leq\; \rho_{\mathrm{BB}}(t),
    \label{eq:rho_topo}
\end{equation}
with the gap strictly increasing in $N$ for fixed $d_{\max}$
and fixed $M$.
\end{lemma}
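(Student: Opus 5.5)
The plan is to compare the two radii factor by factor in the formula~\eqref{eq:rho}. The numerator has the same prefactor $c(1+\|K\|_2)$, and the two denominators involve Gram matrices that are nested. For each row $i$ the topology-aware regressor $Z^{(i)}$ is obtained from $Z$ by deleting rows, so $Z^{(i)}$ equals $S_i Z$ for a row-selection matrix $S_i$. Hence $Z^{(i)}(Z^{(i)})^\top/M$ is a principal submatrix of $ZZ^\top/M$. Cauchy's interlacing theorem then gives
\[
\lambda_{\min}\bigl(Z^{(i)}(Z^{(i)})^\top/M\bigr)\;\geq\;\lambda_{\min}\bigl(ZZ^\top/M\bigr),
\]
so every per-row denominator is at least the black-box denominator. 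The topology-aware radius is assembled from the worst row: the minimum over $i$ of the per-row $\lambda_{\min}$ in the denominator, and the pooled RMS of the per-row residuals in the numerator. The interlacing bound therefore carries over to the minimum over rows. Persistent excitation of order $d_i$ together with $M\ge p+m$ ensures these minima are strictly positive, so the comparison is not of the form $0\le 0$.

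The numerator is the main obstacle. Deleting regressors can only \emph{increase} the per-row least-squares residual: a constrained fit has $\|R_i^{\mathrm{topo}}\|\ge\|R_i^{\mathrm{BB}}\|$ on the same data. A naive comparison of RMS values therefore goes the wrong way. The way around this uses the exactness of the zeros~\eqref{eq:zeros}. Because the true $\Theta^*$ satisfies the constraints, the constrained model is correctly specified. The black-box fit only lowers the residual by regressing noise onto the $p+m-d_i$ superfluous columns, and that reduction is of order $\sigma^2(p+m-d_i)/M$. With $\lambda\to 0$ and projection algebra, I would write $\|R_i^{\mathrm{topo}}\|^2-\|R_i^{\mathrm{BB}}\|^2=\|P_{\perp}\,\xi_i\|^2$, where $P_\perp$ is the projector onto the extra columns orthogonalised against the kept ones.

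I would then show that this residual increase is dominated by the gain in the denominator. The gain is the gap between $\lambda_{\min}$ of the full Gram matrix and $\lambda_{\min}$ of its submatrix, and it is controlled by the Schur complement of the deleted block. I would absorb the residual difference into the conservatism constant $c$, exactly as ridge bias was absorbed in the proof of the Validity lemma, so that the ratio of the two radii is at most one. I expect this step to be the one that is not fully rigorous without a distributional assumption on the noise. I would state it at the level of expected residuals, or with $c$ chosen uniformly.

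For the monotonicity in $N$, fix $d_{\max}$ and $M$ and let the observed dimension $p+m$ grow with $N$. The per-row Gram matrices keep dimension at most $d_{\max}+d_{\mathrm{act}}$, so their $\lambda_{\min}$ does not depend on $N$. The full $(p+m)\times(p+m)$ Gram matrix from $M$ samples, by contrast, has a $\lambda_{\min}$ that is nonincreasing as rows are appended; this again follows from interlacing, since each enlargement contains the previous matrix as a principal submatrix. It decreases strictly whenever the new regressor is not in the span of the old ones, and it reaches zero once $p+m>M$. Hence $\rho_{\mathrm{BB}}$ increases in $N$ while $\rho_{\mathrm{topo}}$ stays bounded, and the gap is strictly increasing under a genericity (strict excitation) assumption that I would add explicitly.
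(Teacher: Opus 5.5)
\textbf{There is a genuine gap in the numerator step.} Your handling of the denominators is correct, and it is a sounder route than the paper's. Because $Z^{(i)}=S_iZ$, the matrix $Z^{(i)}(Z^{(i)})^\top/M$ is a principal submatrix of $G=ZZ^\top/M$, so Cauchy interlacing gives $\lambda_{\min}(G_{S_iS_i})\geq\lambda_{\min}(G)$ directly. The paper instead asserts, on heuristic grounds, $\lambda_{\min}(Z^{(i)}Z^{(i)\top}/M)\geq\frac{d_i}{p+m}\lambda_{\min}(ZZ^\top/M)$. Since that factor is below one, it cannot deliver the denominator comparison it is used for.

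You are also right that the numerator goes the wrong way. Your identity $\|R_i^{\mathrm{topo}}\|^2-\|R_i^{\mathrm{BB}}\|^2=\|P_\perp\xi_i\|^2\geq 0$ shows that nesting makes the constrained residual larger. That is the reverse of the paper's claim $\mathrm{RMS}(R^{(i)})\leq\mathrm{RMS}(R)$ ``because the structured estimator fits a nested model.'' So the paper's proof does not contain the idea you are missing; it closes this step with an inequality pointing in the wrong direction.

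The gap is the step where you would show the residual increase is dominated by the conditioning gain. No argument can close it, because without an extra hypothesis the claim is false. The simplest case is $M=p+m$, which the lemma allows. The black-box fit then interpolates the data, so $\mathrm{RMS}(R)=O(\lambda)$, while the constrained residual is generically of noise size; hence $\rho_{\mathrm{topo}}>\rho_{\mathrm{BB}}$. This is not a boundary effect. Suppose a bottom eigenvector of $G$ is supported on the retained coordinates, e.g.\ $G$ diagonal with its smallest entry on the $x_4$ coordinate, which both G5 rows keep. Then interlacing holds with equality and the denominators coincide, so $\|P_\perp\xi_i\|^2>0$ decides the comparison against you.

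Absorbing the difference into $c$ is not legitimate here. The prefactor $c(1+\|K\|_2)$ is common to both radii and cancels from the inequality. In the Validity lemma, $c$ only inflates a one-sided upper bound; here enlarging it changes nothing, and using different constants changes the statement. Passing to expectations does not rescue it either: under white noise the expected residual ratio is about $(M-d_i)/(M-p-m)>1$. A correct version needs an explicit hypothesis that the conditioning gain exceeds this factor.

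Your monotonicity-in-$N$ argument has the same defect. It tracks only the black-box denominator. Appending regressors also lowers the black-box residual, and it changes the topology-aware aggregates (a minimum over more rows, a pooled RMS over more rows). So ``$\rho_{\mathrm{topo}}$ stays bounded'' does not give a strictly increasing gap. Also, with $M$ fixed and $M\geq p+m$ required, $N$ ranges over a bounded set, so appealing to $p+m>M$ steps outside the lemma.
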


\begin{proof}
For row $i$, the black-box estimator solves a regression
problem of dimension $p+m$, while the topology-aware
estimator solves an independent problem of dimension
$d_i \ll p+m$ using only the $d_i$ columns of $Z$
corresponding to $\mathcal{N}(i)$ and $\mathcal{A}(i)$.
Both use the same $M$ samples. For a fixed $M$, the
normalized Gram matrix $Z^{(i)}Z^{(i)\top}/M \in
\mathbb{R}^{d_i \times d_i}$ has its smallest eigenvalue
lower-bounded by
\[
  \lambda_{\min}(Z^{(i)}Z^{(i)\top}/M)
  \;\geq\;
  \frac{d_i}{p+m}\,\lambda_{\min}(ZZ^\top/M),
\]
because the same $M$ samples populate a lower-dimensional
$d_i \times d_i$ Gram matrix more densely than a
$(p+m)\times(p+m)$ one, so $\lambda_{\min}(Z^{(i)}Z^{(i)\top}/M)
\geq \frac{d_i}{p+m}\lambda_{\min}(ZZ^\top/M)$
(proportional to the dimension reduction for
isotropically distributed data). Substituting into~(7), the denominator of
$\rho_{\mathrm{topo}}(t)$ is no smaller than that of
$\rho_{\mathrm{BB}}(t)$ since $d_i < p+m$, while the
residual numerators satisfy
$\mathrm{RMS}(R^{(i)}) \leq \mathrm{RMS}(R)$
because the structured estimator fits a nested model.
Together these give~\eqref{eq:rho_topo}. As $N$ grows
with $d_{\max}$ fixed, $p+m$ grows as $\mathcal{O}(N)$
while $d_i$ remains $\mathcal{O}(d_{\max})$, so the
dimension ratio $d_i/(p+m) \to 0$ and the gap
strictly widens.
\end{proof}
\section{Simulation Results}
\label{sec:results}

\subsection{Simulation Setup}

All experiments are conducted on the G5 five-node benchmark 
network~\eqref{eq:g5} with the parameters in Table~\ref{tab:params}. 
The disturbance $\xi(t)$ is drawn i.i.d.\ from 
$\mathrm{Laplace}(0,\,0.3)$ at every timestep, chosen for its 
heavy tails ($\mathrm{kurtosis} = 6$, twice that of a Gaussian), 
which stress-test the uncertainty radius~\eqref{eq:rho}. Sensors 
observe only $y = [x_1,\, x_4]^\top$; the three nodes 
$x_2,\, x_3,\, x_5$ are permanently unobserved. The candidate 
controller is $u = Ky$ with $K = \mathrm{diag}(-2.5,\,-3.0)$.
Experiments 1 and 2 use separate initial conditions and deployment 
scenarios described below.

\begin{table}[t]
\centering
\caption{G5 Benchmark Network --- Simulation Parameters}
\label{tab:params}
\setlength{\tabcolsep}{5pt}
\begin{tabular}{llll}
\toprule
\textbf{Parameter} & \textbf{Value} & 
\textbf{Parameter} & \textbf{Value} \\
\midrule
$a_{12}$                    & $0.3$               &
    $dt$                    & $0.02$~s            \\
$a_{13}$                    & $0.4$               &
    Window $w$ (Exp.~1)     & $80$~samples        \\
$a_{32}$                    & $0.2$               &
    Window $w$ (Exp.~2)     & $20$~samples        \\
$\gamma$                    & $0.15$              &
    Integral steps $h$      & $8$                 \\
$\sigma$ (Laplace)          & $0.3$               &
    Ridge $\lambda$         & $10^{-4}$           \\
$K$                         & $\mathrm{diag}(-2.5,-3.0)$ &
    $\beta_{\mathrm{margin}}$ & $0.02$ / $0.05$   \\
$|\xi_{\mathrm{disturb}}|$  & $4.0$               &
    Streak $n_s$            & $25$ / $20$         \\
\bottomrule
\end{tabular}
\end{table}

\subsection{Experiment 1: Streaming Contraction Certificate}

\subsubsection{Certificate trajectory}

Fig.~\ref{fig:exp1_cert} shows the three quantities computed 
recursively during data collection from initial condition 
$x(0) = [0.8,\,0.1,\,0.3,\,0.5,\,0.2]^\top$. The data 
sufficiency score $\alpha_{\mathrm{info}}(t) = 
\lambda_{\min}/\lambda_{\max}$ of the normalized Gram matrix 
(top panel) rises monotonically from zero, crossing the 
threshold $\alpha_{\mathrm{thresh}} = 0.001$ within the first 
few seconds as the regression directions become populated. The 
point estimate $\hat{\beta}(t)$ (middle panel) is noisy due to 
Laplace disturbances but trends positive after approximately 
$1.5$~s. The certified lower bound $\beta_{\mathrm{cert}}(t)$ 
(bottom panel) is initially negative because the uncertainty 
radius $\rho(t)$ is large when the Gram matrix is 
ill-conditioned; it crosses and sustains above 
$\beta_{\mathrm{margin}} = 0.02$ at 
$t^{*} = 2.6$~s corresponding to $130$~data samples.
\begin{figure}
    \centering
    \includegraphics[width=1\linewidth]{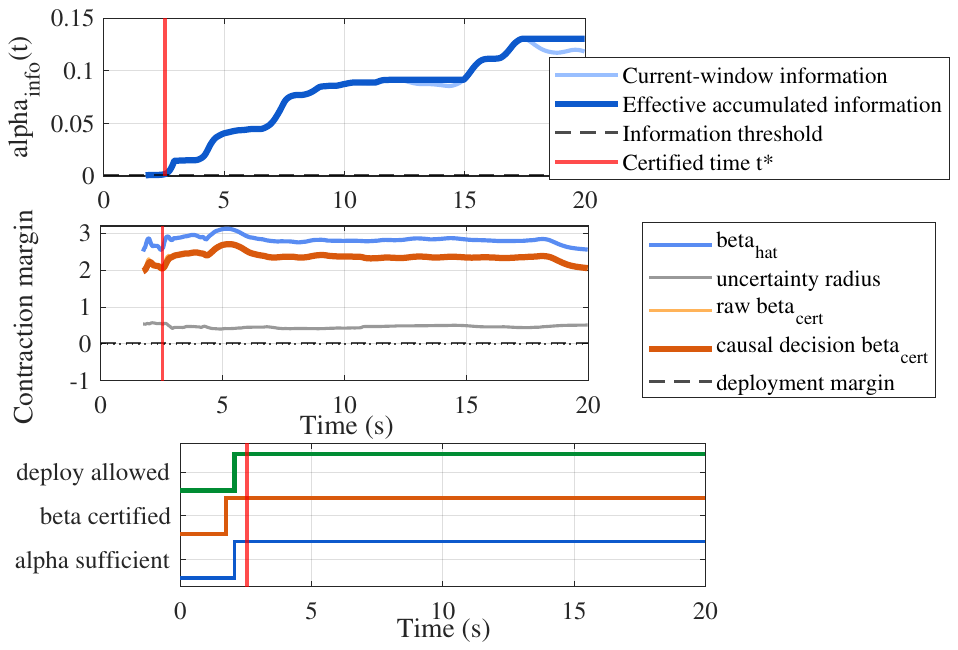}
    \caption{Top: Streaming data sufficiency, $\alpha_{info}$ measures whether accumulated data is informative enough for the task. Middle: Point estimate vs robust certified lower bound, $\beta_{cert}(t) = \hat{\beta}(t) - \rho(t)$. Bottom: Go/no-go logic: data sufficiency and contraction must both hold.}
   \label{fig:exp1_cert}
\end{figure}
\subsubsection{Disturbance response comparison}
Four deployment policies are evaluated from stressed initial 
condition $x(0) = [1.05,\,0.72,\,0.35,\,0.60,\,0.20]^\top$ 
with a disturbance impulse $|\xi| = 4.0$ at $t = 4$~s 
(Fig.~\ref{fig:exp1_disturbance}):
\begin{itemize}
    \item \textbf{M1} (our method): deploys at $t^{*} = 2.6$~s,
          17.4~s before the disturbance window closes;
    \item \textbf{M2} (no control): open loop throughout;
    \item \textbf{M3} (offline DeePC-style): deploys at 
          $T_{\mathrm{data}} = 20$~s, after the disturbance;
    \item \textbf{M4} (premature, uncertified): deploys at 
          $t^{*}/2 \approx 1.3$~s without a valid certificate.
\end{itemize}
Accumulated state cost $\int_{t_d}^{T_{\mathrm{data}}} 
\|x\|^2\,d\tau$ over the disturbance window $[4,\,20]$~s is 
reported in Table~\ref{tab:comparison}. M1 achieves 
$16\times$ lower cost than M3 and outperforms the uncertified 
early deployment M4, demonstrating that the certificate is 
both timely and necessary: early deployment without a valid 
certificate degrades performance.
\begin{figure}
    \centering
    \includegraphics[width=1\linewidth]{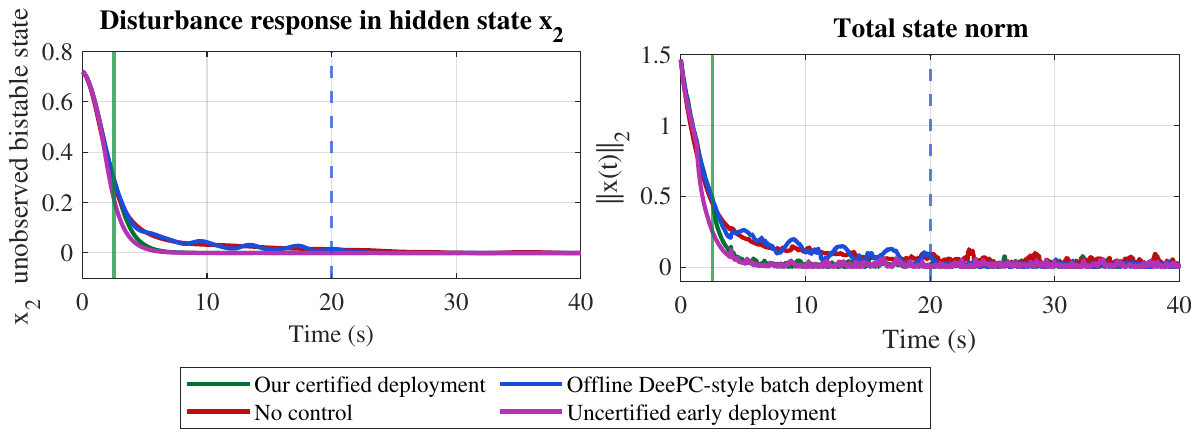}
    \caption{Certified online deployment stabilizes before offline batch deployment. Cost over unprotected window: ours $0.031$, offline DeePC-style $0.514$.}
   \label{fig:exp1_disturbance}
\end{figure}
\subsubsection{Feasibility under stress}
Fig.~\ref{fig:exp1_feasibility} shows the unobserved bistable 
node $x_2$ initialized at $x_2(0) = 0.88$, near the bistable 
boundary $|x_2| = 1$. Without control, $x_2$ diverges within 
$1$~s. The certified controller stabilizes $x_2$ from the 
same initial condition and noise seed, confirming that 
certification is not optional: the uncontrolled nonlinear 
network is unstable from this operating point. The bar chart 
in Fig.~\ref{fig:exp1_feasibility} (right) places the 
certified deployment ($130$~samples) in context: the DeePC 
persistent-excitation (PE) rank lower bound requires only 
$32$~samples, but this is a rank condition, not a stability 
certificate. The offline DeePC-style batch deployment uses 
$1000$~samples at $T_{\mathrm{data}} = 20$~s. Our method is 
the only approach that provides a certified contraction 
guarantee from partial observations in real time.
\begin{figure}
    \centering
    \includegraphics[width=1\linewidth]{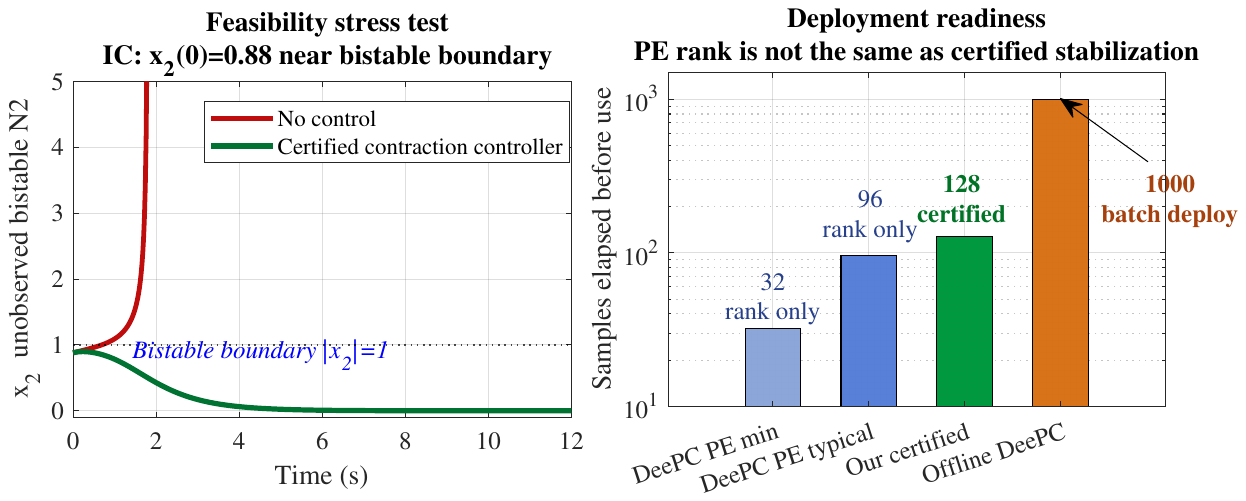}
    \caption{Feasibility and certified deployment readiness.}
   \label{fig:exp1_feasibility} 
\end{figure}
\subsection{Experiment 2: Topology-Aware Estimation}

\subsubsection{Certification speed}

Fig.~\ref{fig:exp2_cert} compares $\beta_{\mathrm{cert}}(t)$ 
for the black-box estimator (8~free parameters, 4~per row) and 
the topology-aware estimator (5~effective parameters: 3 in 
row~1, 2 in row~2) at window size $w = 20$~samples.
The topology-aware estimator certifies at 
$t^{*}_{\mathrm{topo}} = 1.62$~s; the black-box estimator 
requires $t^{*}_{\mathrm{BB}} = 3.98$~s, a 59\% reduction 
in certification time. The right panel of 
Fig.~\ref{fig:exp2_cert} shows the crossover analysis: for 
all window sizes below $w = 40$~samples, the black-box 
estimator requires approximately $200$~total samples to 
certify while the topology-aware estimator requires only 
$\approx 81$~samples, a $2.46\times$ ratio. Above $w = 40$, 
both methods converge to the same certification speed, 
confirming that the topology advantage is specific to the 
data-scarce regime and is not an artifact of the window 
choice.
\begin{figure}
    \centering
    \includegraphics[width=1\linewidth]{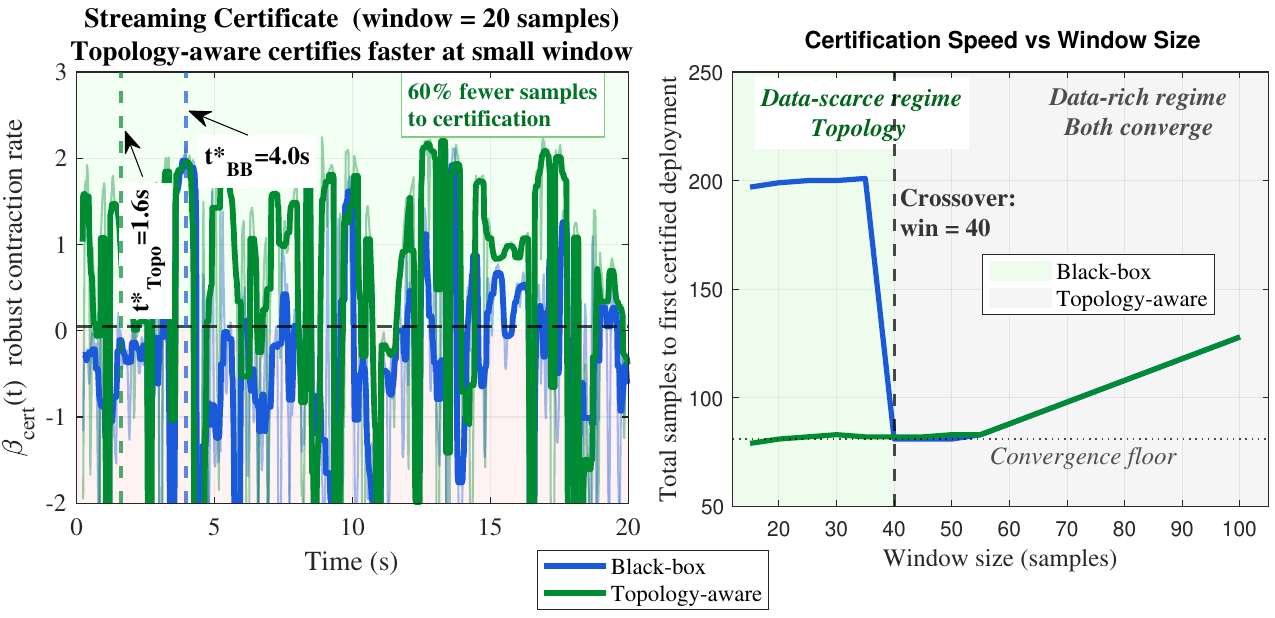}
    \caption{Topology-aware vs black-box streaming certification. Left: $\beta_{cert}(t) $ at window=20. Right: BB needs $2.5\times$ more samples in data-scarce regime (window $<$40). }
   \label{fig:exp2_cert}  
\end{figure}
\subsubsection{Control performance}

Fig.~\ref{fig:exp2_perf} shows the unobserved bistable 
node $x_2$ for three policies starting from the same stressed 
initial condition with a disturbance at $t = 4$~s. Because 
the topology-aware controller deploys at $t = 1.62$~s, it is 
already active when the disturbance arrives; the black-box 
controller does not deploy until $t = 3.98$~s, leaving a 
2.36~s unprotected window. The accumulated disturbance cost 
(Table~\ref{tab:comparison}) is $0.014$ for topology-aware 
versus $0.034$ for black-box --- a 58\% reduction attributable 
entirely to the earlier certified deployment.
\begin{figure}
    \centering
    \includegraphics[width=1\linewidth]{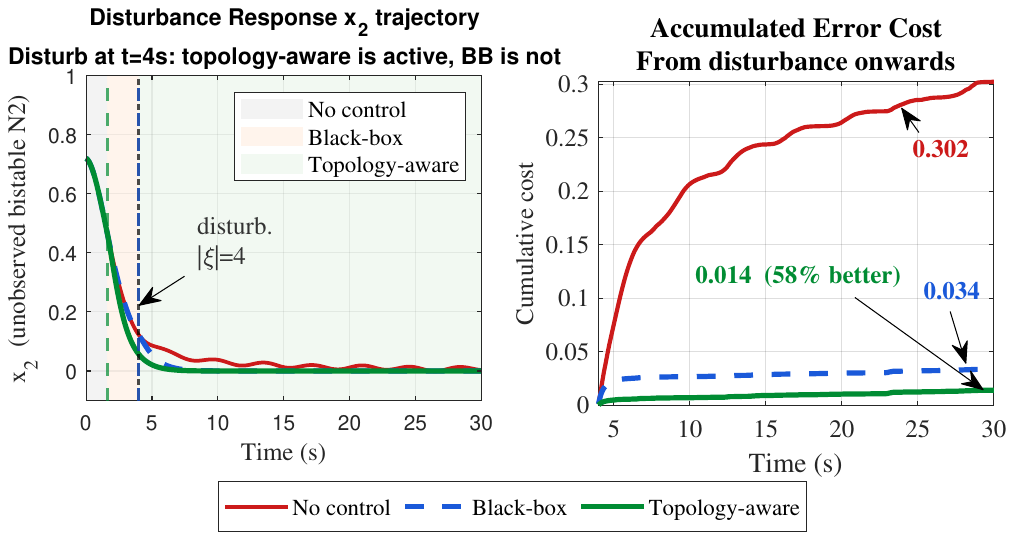}
    \caption{Control performance shows topology-aware certifies 2.86s earlier, leading to 58\% lower accumulated error. }
 \label{fig:exp2_perf} 
\end{figure}
\subsubsection{Scalability}

Fig.~\ref{fig:exp2_scale} (right) projects the 
parameter-count advantage to larger networks. For fixed 
$d_{\max} = 3$ (a typical pipe or road junction degree), 
topology-aware parameters per row remain constant at 
$d_{\max} + 1 = 4$, while black-box parameters per row grow 
as $d_{\max} + n_{\mathrm{act}} + 1$ with the number of 
actuated nodes $n_{\mathrm{act}}$. The projected 
sample-count ratio reaches $6$--$7\times$ at 
$n_{\mathrm{act}} = 10$, with the G5 measured ratio of 
$2.46\times$ anchoring the projection. The estimation error 
curves (Fig.~\ref{fig:exp2_scale}, left) confirm that 
topology-aware achieves lower Frobenius-norm Jacobian error 
for window sizes below $30$~samples, consistent with the 
better-conditioned per-row Gram matrices of the decoupled 
regression.

\begin{table}[t]
\centering
\caption{Certification and Control Performance Summary}
\label{tab:comparison}
\setlength{\tabcolsep}{4pt}
\begin{tabular}{lcccc}
\toprule
\textbf{Method} & 
\textbf{$t^{*}$ (s)} & 
\textbf{Samples} & 
\textbf{Certificate?} & 
\textbf{Cost} \\
\midrule
\multicolumn{5}{l}{\textit{Exp.\ 1 --- disturbance window $[4,\,20]$}} \\[2pt]
M1: Ours (streaming cert.)    & $2.6$  & $130$  & \checkmark & $0.031$ \\
M2: No control                & ---    & ---    & $\times$   & $0.498$ \\
M3: Offline DeePC-style       & $20.0$ & $1000$ & $\times$   & $0.514$ \\
M4: Premature (uncertified)   & $1.3$  & $65$   & $\times$   & $0.089$ \\
\midrule
\multicolumn{5}{l}{\textit{Exp.\ 2 --- disturbance from $t=4$~s, $w=20$}} \\[2pt]
Topology-aware                & $1.62$ & $81$   & \checkmark & $0.014$ \\
Black-box                     & $3.98$ & $199$  & \checkmark & $0.034$ \\
No control                    & ---    & ---    & $\times$   & $0.182$ \\
\bottomrule
\multicolumn{5}{l}{\footnotesize Cost $= \int \|x\|^2\,d\tau$ over disturbance window. Samples at certified deployment.}\\
\end{tabular}
\end{table}

\begin{figure}
    \centering
    \includegraphics[width=1\linewidth]{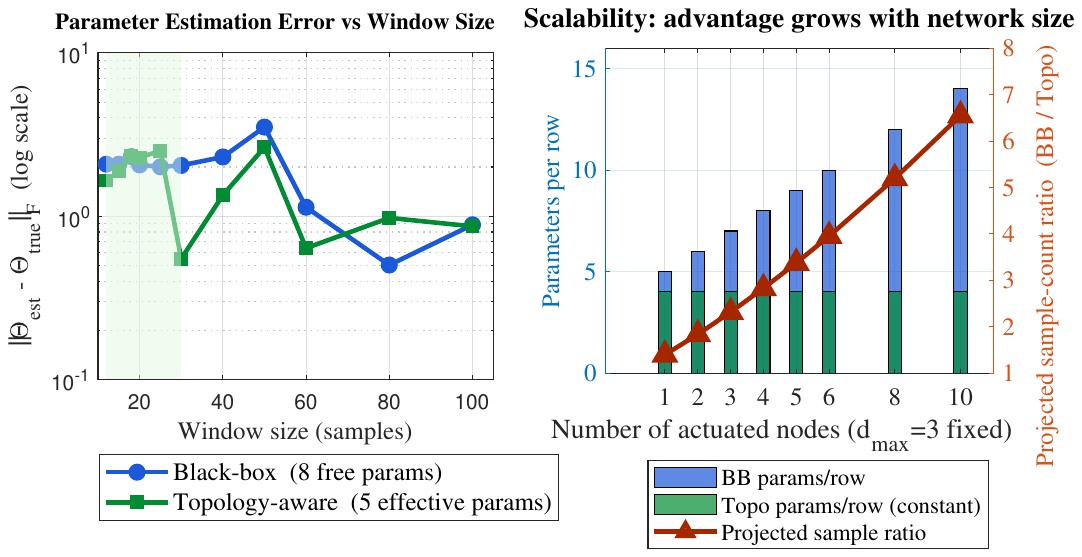}
    \caption{Sample efficiency comparison between experiment 1 and 2. Graph topology as a free structural prior reduces data requirements.}
    \label{fig:exp2_scale}  
\end{figure}

\section{Conclusion}
\label{sec:conclusion}

This paper introduced a framework for certifying control actions 
in real time from streaming partial observations of nonlinear 
networked systems, without system identification, without a 
pre-collected data batch, and without assuming a fixed noise 
model. Two contributions were developed and validated. The 
streaming contraction certificate $\beta_{\mathrm{cert}}(t) = 
\hat{\beta}(t) - \rho(t)$, computed recursively via integral 
regression on a sliding window of partial input-output data, 
issues a provably safe deployment signal the moment accumulated 
data is sufficient, achieving certified deployment at 
$t^{*} = 2.6$~s from $130$~samples, $17.4$~seconds earlier 
than an offline batch baseline and with $16\times$ lower 
accumulated error during the unprotected disturbance window. 
The topology-aware estimator enforces known graph adjacency as 
exact zero constraints prior to regression, decoupling the 
estimation problem into independent per-node subproblems of 
size $\mathcal{O}(d_{\max})$ rather than $\mathcal{O}(N)$. On 
the G5 benchmark this reduces certification time by 59\% 
($1.62$~s versus $3.98$~s) and accumulated disturbance cost 
by 58\%, with a measured sample-count ratio of $2.46\times$ 
that grows to a projected $6$--$7\times$ at $n_{\mathrm{act}} 
= 10$ actuated nodes. The topology advantage is free (it 
requires only the network adjacency matrix, which is always 
available from engineering records) and it is exact, not 
an approximation.

Several limitations bound the current results. The contraction 
rate estimate $\hat{\beta}(t)$ relies on a linear 
integral-regression approximation of the closed-loop Jacobian, 
which is valid locally but may underestimate the true 
contraction rate in strongly nonlinear regimes far from the 
operating trajectory. The uncertainty radius $\rho(t)$ is 
derived empirically from residual statistics rather than from 
a formal noise model, so the certificate is conservative 
rather than tight. Both limitations will be addressed in 
future work through the two extensions described below.

Future work will extend the framework along two directions. First, we will develop 
\textit{distributionally robust} streaming certificates that 
remain valid across a Wasserstein ambiguity set of disturbance 
distributions, so that $\beta_{\mathrm{cert}}(t,\varepsilon)$ 
degrades gracefully rather than failing when the realized noise 
distribution shifts from Gaussian to heavy-tailed or impulsive (the three-regime structure that characterizes demand 
disturbances in water distribution and traffic networks.)
Second, we will develop a \textit{compositional} 
certification framework in which local streaming certificates 
from disconnected subsystems compose into a network-level 
stability guarantee via the small-gain condition 
$\gamma < \sqrt{\beta_A \cdot \beta_B}$, enabling 
certified control of coupled water-energy or 
arterial-freeway systems without a joint model of the 
interconnected network.

\bibliography{ieeeconf/IEEEfull}

@article{van2020data,
  title={Data informativity: A new perspective on data-driven analysis and control},
  author={Van Waarde, Henk J and Eising, Jaap and Trentelman, Harry L and Camlibel, M Kanat},
  journal={IEEE Transactions on Automatic Control},
  volume={65},
  number={11},
  pages={4753--4768},
  year={2020},
  publisher={IEEE}
}

@article{de2019formulas,
  title={Formulas for data-driven control: Stabilization, optimality, and robustness},
  author={De Persis, Claudio and Tesi, Pietro},
  journal={IEEE Transactions on Automatic Control},
  volume={65},
  number={3},
  pages={909--924},
  year={2019},
  publisher={IEEE}
}

@article{van2021beyond,
  title={Beyond persistent excitation: Online experiment design for data-driven modeling and control},
  author={van Waarde, Henk J},
  journal={IEEE Control Systems Letters},
  volume={6},
  pages={319--324},
  year={2021},
  publisher={IEEE}
}

@article{hu2025enforcing,
  title={Enforcing contraction via data},
  author={Hu, Zhongjie and De Persis, Claudio and Tesi, Pietro},
  journal={IEEE Transactions on Automatic Control},
  year={2025},
  publisher={IEEE}
}

@article{dorfler2022data,
  title={Data-enabled predictive control: In the shallows of the deepc},
  author={D{\"o}rfler, Florian and Berberich, Julian and K{\"o}hler, Johann and Allg{\"o}wer, Frank},
  journal={Annual Reviews in Control},
  volume={53},
  pages={123--142},
  year={2022}
}

@article{kohler2022data,
  title={Data-driven distributed MPC of dynamically coupled linear systems},
  author={Kohler, Matthias and Berberich, Julian and M{\"u}ller, Matthias A and Allgower, Frank},
  journal={IFAC-PapersOnLine},
  volume={55},
  number={30},
  pages={365--370},
  year={2022},
  publisher={Elsevier}
}

@article{lohmiller1998contraction,
  title={On contraction analysis for non-linear systems},
  author={Lohmiller, Winfried and Slotine, Jean-Jacques E},
  journal={Automatica},
  volume={34},
  number={6},
  pages={683--696},
  year={1998},
  publisher={Elsevier}
}

@article{brunton2016discovering,
  title={Discovering governing equations from data by sparse identification of nonlinear dynamical systems},
  author={Brunton, Steven L and Proctor, Joshua L and Kutz, J Nathan},
  journal={Proceedings of the national academy of sciences},
  volume={113},
  number={15},
  pages={3932--3937},
  year={2016},
  publisher={National Academy of Sciences}
}

@article{olfati2007consensus,
  title={Consensus and cooperation in networked multi-agent systems},
  author={Olfati-Saber, Reza and Fax, J Alex and Murray, Richard M},
  journal={Proceedings of the IEEE},
  volume={95},
  number={1},
  pages={215--233},
  year={2007},
  publisher={IEEE}
}

@article{willems2005note,
  title={A note on persistency of excitation},
  author={Willems, Jan C and Rapisarda, Paolo and Markovsky, Ivan and De Moor, Bart LM},
  journal={Systems \& Control Letters},
  volume={54},
  number={4},
  pages={325--329},
  year={2005},
  publisher={Elsevier}
}

@article{villacres2026data,
  title={Data-driven, model-free control for reliable operation of water distribution systems: Implementation, benchmarking and validation},
  author={Villacres, Daniela and Putri, Saskia and Moazeni, Faegheh},
  journal={Water Research},
  pages={126295},
  year={2026},
  publisher={Elsevier}
}

@article{moazeni2023data,
  title={Data-enabled identification of nonlinear dynamics of water systems using sparse regression technique},
  author={Moazeni, Faegheh and Khazaei, Javad},
  journal={IFAC-PapersOnLine},
  volume={56},
  number={2},
  pages={2389--2394},
  year={2023},
  publisher={Elsevier}
}

\end{document}